\title{Global Value Numbering:\\ A Precise and Efficient Algorithm}
\author{Rekha R. Pai}
\institute{National Institute of Technology Calicut, Kerala, India \\ \email{rekharamapai@nitc.ac.in}}
\begin{document}

\maketitle

\begin{abstract}
Global Value Numbering (GVN) is an important static analysis to detect equivalent expressions in a program. We present an iterative data-flow analysis GVN algorithm in SSA for the purpose of detecting total redundancies. The central challenge is defining a \emph{join} operation to detect equivalences at a join point in polynomial time such that  later occurrences of redundant expressions could be detected. For this purpose, we introduce the novel concept of \emph{value $\phi$-function}. We claim the algorithm is precise and takes only polynomial time.
\keywords{Global Value Numbering, redundancy detection, value $\phi$-function}
\end{abstract}

\section{Introduction} Global Value Numbering is an important static analysis to detect equivalent expressions in a program. Equivalences are detected by assigning \emph{value numbers} to expressions. Two expressions are assigned the same value number if they could be detected as equivalent. The seminal work on GVN by Kildall \cite{Kildall1973} detects all \emph{Herbrand equivalences} \cite{Ruething1999} in non-SSA form of programs using the powerful concept of \emph{structuring} but takes exponential time. Efforts were made to improve on efficiency in detecting equivalences. However the algorithms are either as precise as Kildall's \cite{Saleena2014} or efficient \cite{Ruething1999, Alpern1988, Gulwani2007} but not both.

The strive for combining precision with efficiency has motivated our work in this area. We propose an iterative data-flow analysis GVN algorithm to detect redundancies in SSA form of programs that is precise as Kildall's and efficient (i.e.\,take only polynomial time). As in a data-flow analysis problem, the central challenge is to define a \emph{join} operation to detect all equivalences at a join point in polynomial time such that any later occurrences of redundant expressions could be detected. We introduce the novel concept of \emph{value $\phi$-function} for this purpose.

\section{Terminology}
\paragraph{Program Representation} Input to our algorithm is the Control Flow Graph (CFG) representation of a program in SSA. The graph has empty \emph{entry} and \emph{exit} blocks. Other blocks contain assignment statements of the form $x = e$, where $e$ is an expression which is either a constant, a variable, or of the form $x \oplus y$ such that $x$ and $y$ are constants or variables and $\oplus$ is a generic binary operator. An expression can also be of the form $\phi_{k}(x, y)$, called \emph{$\phi$-functions}, where $x$ and $y$ are variables and $k$ is the block in which it appears. We assume a block can have at most two predecessors and a block with exactly two predecessors is called \emph{join} block. The input and output points of a block are called \emph{in} and \emph{out} points, respectively, of the block. The \emph{in} point of a join block is called \emph{join point}. We may omit the subscript $k$ in $\phi_{k}$ when the join block is clear from the context. In the CFGs we draw, $\phi$-functions appear in join blocks. But for clarity in explaining some of our concepts we assume $\phi$-functions are transformed to \emph{copy} statements and appended to appropriate predecessors of the join block.

\paragraph{Equivalence} Two expressions $e_{1}$ and $e_{2}$ are \emph{equivalent}, denoted $e_{1} \equiv e_{2}$, if they will have the same value whenever they are executed. Two expressions in a path are said to be \emph{equivalent in the path} if they are equivalent in that path. We detect only Herbrand equivalences \cite{Ruething1999} which is equivalence among expressions with same operators and corresponding operands being equivalent.

\section{Basic Concept}
Our main goal is to detect equivalences with a view to detecting redundancies in a program in polynomial time. We introduce the concept of \emph{value $\phi$-function} for the purpose which is explained in this section followed by our method to detect redundancies.

\subsection{Value $\phi$-function} Consider the simple code segment in Fig.\,1(a). Here irrespective of the path taken $x_{1}+y_{1}$ is equivalent to $a_{1}+b_{1}$. In terms of the variables being assigned to, we can say $z_{1}$ is equivalent to same variable $c_{1}$.
\begin{figure}[ht]
        \centering
        \begin{subfigure}[b]{0.3\textwidth}
                \includegraphics[width=17mm, height=15mm]{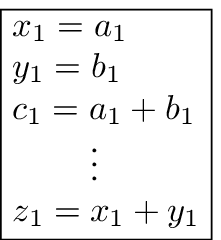}
                \caption{Linear program}
                \label{fig:1a}
        \end{subfigure}
        \begin{subfigure}[b]{0.4\textwidth}
                \includegraphics[width=55mm, height=20mm]{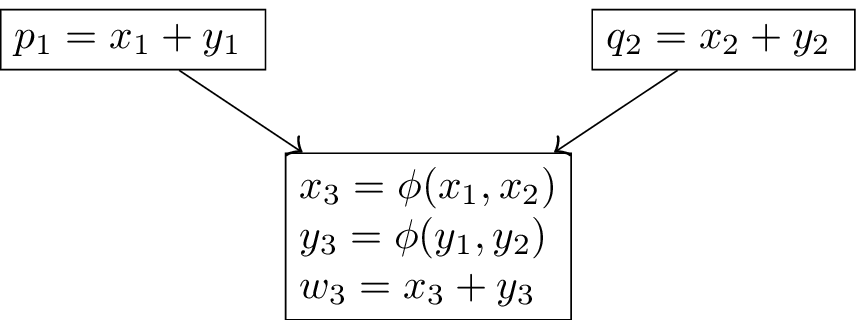}
                \caption{Program with branches}
                \label{fig:1b}
        \end{subfigure}
        \caption{Concept of value $\phi$-function}\label{fig:vpf}
\end{figure}
Now consider the code segment in Fig.\,1(b). Depending on the path taken expression $x_{3}+y_{3}$ is equivalent to either $x_{1}+y_{1}$ or $x_{2}+y_{2}$. In terms of the variables being assigned to, we can say $w_{3}$ is equivalent to merge of different variables -- $p_{1}$ and $q_{2}$. Inspired by the notion of $\phi$-function, we can say $w_{3}$ is equivalent to $\phi(p_{1}, q_{2})$. This notion of $\phi$-function is an extended notion of $\phi$-function as seen in the literature. In the literature, a $\phi$-function has different subscripted versions of the same non-SSA variable, say $\phi(x_{1}, x_{2})$. To express such equivalences, we introduce the concept of \emph{value $\phi$-function} similar to the concept of \emph{value expression} \cite{Saleena2014}.

\paragraph{Value $\phi$-function} A \emph{value $\phi$-function} is an abstraction of a set of equivalent $\phi$-functions (including the extended notion of $\phi$-function). Let $v_{i}$, $v_{j}$ be value numbers and \emph{vpf} be a value $\phi$-function. Then $\phi_{k}(v_{i}, v_{j})$, $\phi_{k}(\emph{vpf}, v_{j})$, $\phi_{k}(v_{i}, \emph{vpf})$, and $\phi_{k}(\emph{vpf}, \emph{vpf})$ are \emph{value $\phi$-functions}.

\paragraph{Partition} A partition at a point represents equivalences that hold in the paths to the point. An equivalence class in the partition has a value number and elements like variables, constant, and value expression. It is also annotated with a value $\phi$-function when necessary. The notation for a partition is similar to that in \cite{Saleena2014} except that a class can be annotated with value $\phi$-function.

\section{Proposed Method}
Using the concept of value $\phi$-function we propose an iterative data-flow analysis algorithm to compute equivalences at each point in the program. The two main tasks in this algorithm are \emph{join} operation and \emph{transfer function}:

\subsection{\emph{Join} operation.} A \emph{join} operation detects equivalences that are common in all paths to a join point. The join is conceptually a class-wise intersection of input partitions. Let $C_{1}$ and $C_{2}$ be two classes, one from each input partition. If the classes have same value number then the resulting class $C$  is intersection of $C_{1}$ and $C_{2}$. If the classes have different value numbers, say $v_{1}$ and $v_{2}$ respectively, then common equivalences are found by intersection of $C_{1}$ and $C_{2}$. The common equivalences obtained are actually a merge of different variables, which is indicated by the difference in value numbers and hence class $C$ is annotated with $\phi(v_{1}, v_{2})$. Now if the classes have different value expressions, say $v_{m}+v_{n}$ and $v_{p}+v_{q}$ respectively, the value expressions may be merged to form a resultant value expression say $v_{i}+v_{j}$. Value expressions $v_{m}+v_{n}$ and $v_{p}+v_{q}$ are merged to get $v_{i}+v_{j}$ by recursively merging classes of $v_{m}$ and $v_{p}$ to get class of $v_{i}$ and classes of $v_{n}$ and $v_{q}$ to get class of $v_{j}$ \cite{Saleena2014}. But merging the value expressions can lead to exponential growth of resulting partition \cite{Gulwani2007}. We do not merge different value expressions now instead merge them at a point where an expression represented by $v_{i}+v_{j}$ actually occurs in the program. This merge is achieved simply by detecting equivalence of $v_{i}+v_{j}$ with $\phi(v_{1}, v_{2})$ and is done during application of transfer function.

\paragraph{Example} Let us now consolidate the concept of join using an example. Consider the case of applying join on partitions $P_{1} = \{v_{1}, x_{1}, x_{3} | v_{2}, y_{1}, y_{3}, v_{1}+1 | v_{3}, z_{1}, z_{3}\}$ and $P_{2} = \{v_{4}, x_{2}, x_{3} | v_{5}, y_{2}, y_{3} | v_{6}, z_{2}, z_{3}, v_{4}+1 \}$. In the classes with value numbers $v_{1}$ in $P_{1}$ and $v_{4}$ in $P_{2}$ there is only one common variable $x_{3}$ and this will appear in a class in the resulting partition $P_{3}$. Since the two classes in $P_{1}$ and $P_{2}$ have different value numbers $v_{1}$ and $v_{4}$, respectively, the resulting class is annotated with value $\phi$-function $\phi(v_{1}, v_{4})$. The class is assigned a new value number, say $v_{7}$. The resulting class is $| v_{7}, x_{3} : \phi(v_{1}, v_{4}) |$. Now consider the classes with value numbers $v_{2}$ in $P_{1}$ and $v_{6}$ in $P_{2}$.  There are no obvious common equivalences in the classes and we don't merge the different value expressions now. Hence no new class is created. Similar strategies are adopted in detecting common equivalences in other pairs of classes one each from $P_{1}$ and $P_{2}$. The resulting partition $P_{3}$ is $\{v_{7}, x_{3} : \phi(v_{1}, v_{4}) | v_{8}, y_{3} : \phi(v_{2}, v_{5}) | v_{9}, z_{3} : \phi(v_{3}, v_{6})\}$.
 
 \begin{codebox}
 \Procname{$\proc{Join}(P_{1}, P_{2})$}
  \li $P \gets \{ \}$  
  \li \For each pair of classes $C_{i} \in P_{1}$ and $C_{j} \in P_{2}$
  \li    \Do
             $C_{k} \gets C_{i} \cap C_{j}$ \>\>\>\>\>\>\> \small \Comment set intersection
  \li        \If $C_{k} \neq \{\}$ and $C_{k}$ does not have value number
  \li        \Then
                 $C_{k} \gets C_{k} \cup \{v_{k}, \phi_{b}(v_{i}, v_{j})\}$ \>\>\>\>\> \small \Comment $v_{k}$ is new value number
  \zi   \>\>\>\>\small \Comment $v_{i} \in C_{i}$, $v_{j} \in C_{j}, $ $b$ is join block
             \EndIf
  \li    $P \gets P \cup C_{k}$ \>\>\>\>\>\>\> \small \Comment Ignore when $C_{k}$ is empty  
  \li \Return $P$
\end{codebox}
Note: We define special partition $\top$ such that $\proc{Join}(\top, P) \gets P \gets \proc{Join}(\top, P)$. We assume $\phi$-functions in a join block are transformed to copies and appended to appropriate predecessors of join block.

\subsection{Transfer Function.}\label{trfn} Given a partition $PIN_{s}$, that represents equivalences at \emph{in} point of a statement $s: x \gets e$ the transfer function computes equivalences at its \emph{out} point, denoted $POUT_{s}$. Let \emph{ve} be the value expression of $e$ computed using $PIN_{s}$. If \emph{ve} is present in a class in $PIN_{s}$, then $x$ is just inserted into corresponding class in $POUT_{s}$. Otherwise the transfer function checks whether $e$ could be expressed as a merge of variables represented by a value $\phi$-function \emph{vpf} (as illustrated below). If it is present in a class in $PIN_{s}$ then $x$, \emph{ve} are inserted into corresponding class in $POUT_{s}$. Else a new class is created in $POUT_{s}$ with new value number and $x$, \emph{ve}, \emph{vpf} are inserted into it. 

For an example, consider processing the statement $w_{3} = x_{3}+ y_{3}$ as shown in code segment in Fig.\,2.
\begin{figure}[ht]
 \centering
 \includegraphics[width=115mm, height=30mm]{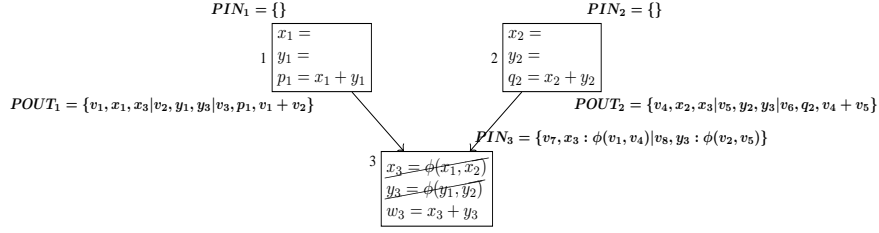}
\caption{Concept of Transfer Function}\label{fig:tf}
\end{figure} 
Since value expression $v_{7}+v_{8}$ of $x_{3}+y_{3}$ is not in $PIN_{3}$, the transfer function proceeds to check whether $x_{3}+y_{3}$ is actually a merge of variables as follows:\\
$x_{3}+y_{3} \equiv v_{7}+v_{8} \equiv \phi(v_{1}, v_{4})+\phi(v_{2}, v_{5}) \equiv \phi(v_{1}+v_{2}, v_{4}+v_{5}) \equiv \phi(v_{3}, v_{6}) $.\\
This implies $x_{3}+y_{3}$ is actually a merge of variables, here $p_{1}$ and $q_{2}$. Since neither $v_{7}+v_{8}$ nor $\phi(v_{3}, v_{6})$ are present in $PIN_{3}$, a new class is created in $POUT_{3}$ with new value number say $v_{9}$ and $w_{3}$, $v_{7}+v_{8}$, and $\phi(v_{3}, v_{6})$ are inserted into it. The classes in $PIN_{3}$ are inserted as such into $POUT_{3}$. The resulting partition $POUT_{3}$ is $\{v_{7}, x_{3} : \phi(v_{1}, v_{4}) | v_{8}, y_{3} : \phi(v_{2}, v_{5}) | v_{9}, w_{3}, v_{7}+v_{8} : \phi(v_{3}, v_{6}) \}$.
\begin{codebox}
 \Procname{$\proc{transferFunction}(x \gets e, PIN_{s})$}
 \li $POUT_{s} \gets PIN_{s}$
 \li $C_{i} \gets C_{i} - \{x\}$	\>\>\>\>\>\>\>\> \small \Comment $x \in C_{i}$, a class in $POUT_{s}$
 \li \emph{ve} $\gets \proc{valueExpr}(e)$
 \li \emph{vpf} $\gets \proc{valuePhiFunc}(ve, PIN_{s})$ \>\>\>\>\>\>\>\> \small \Comment can be NULL
 \li \If \emph{ve} or \emph{vpf} is in a class $C_{i}$ in $POUT_{s}$ \>\>\>\>\>\>\>\> \small \Comment ignore \emph{vpf} when NULL
 \li \Then
       $C_{i} \gets C_{i} \cup \{x, ve\}$  \>\>\>\>\>\> \small \Comment set union
 \li \Else
       $POUT_{s} \gets POUT_{s} \cup \{v_{n}, x, ve :$ \emph{vpf}$\}$ \>\>\>\>\>\> \small \Comment $v_{n}$ is new value number  
     \EndIf
 \li \Return $POUT_{s}$
\end{codebox}
The \proc{valuePhiFunc} is a recursive algorithm to compute value $\phi$-function corresponding to input value expression when possible else it returns NULL.

\subsection{Detect Redundancies.} Given partition $POUT$ at \emph{out} of statement $x = e$, expression $e$ is detected to be redundant if there exists a variable in the class of $x$ in $POUT$, other than $x$, or the class of $x$ in $POUT$ is annotated with value $\phi$-function. In the example code in Fig.\,2, consider the case of checking whether $x_{3}+y_{3}$ in the last statement $w_{3} = x_{3}+y_{3}$ is redundant. In the class of $w_{3}$ in $POUT_{3}$ (computed in previous subsection) there are no variables other than $w_{3}$. However the class is annotated with a value $\phi$-function. Hence the expression $x_{3}+y_{3}$ is detected to be redundant.

\begin{theorem}
 Two program expressions are equivalent at a point iff the iterative data-flow analysis algorithm detects their equivalence.
\end{theorem}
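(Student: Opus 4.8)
The statement is an equivalence, so I would prove the two implications separately, by essentially independent inductions.

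\emph{Soundness} (if the algorithm reports that two program expressions are equivalent at a point, then they really are). The plan is to show, by induction on the steps of the fixpoint iteration, that every partition $P$ the algorithm attaches to a program point $p$ is \emph{sound}: (i) any two terms lying in one class of $P$ evaluate to the same value on every path reaching $p$, reading a value expression $v_m \oplus v_n$ recursively as the expression built from representatives of the classes of $v_m$ and $v_n$; and (ii) if a class is annotated with $\phi_b(v_i,v_j)$ then the class's value coincides with that of $v_i$ on every path that last entered the join block $b$ through its first predecessor, and with that of $v_j$ on every path that last entered $b$ through its second predecessor. The base case is the empty partition at \emph{entry}. For \proc{Join}, class-wise intersection preserves (i) immediately, and a fresh annotation $\phi_b(v_i,v_j)$ on $C_i\cap C_j$ is correct because, on a path entering $b$ through its first predecessor, every element of $C_i$ has the value of $v_i$ by the induction hypothesis, hence so does every element of the intersection; the second predecessor is symmetric. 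For \proc{transferFunction} the only delicate point is the rewriting performed inside \proc{valuePhiFunc}, which rests entirely on the algebraic identity $\phi_b(a_1,a_2)\oplus\phi_b(c_1,c_2)\equiv\phi_b(a_1\oplus c_1,a_2\oplus c_2)$ --- valid because on each incoming path a $\phi_b$ simply selects one of its two arguments --- so that whenever \proc{valuePhiFunc} returns a value $\phi$-function equal to the one annotating an existing class, inserting $x$ into that class is sound by (ii). Soundness of the redundancy test is then immediate from (i) and (ii).

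\emph{Precision} (if two program expressions are equivalent at a point, the algorithm detects it). This is the substantial direction, and the one where the value $\phi$-function earns its keep. I would induct on the structure of a Herbrand-equivalence derivation of $e_1\equiv e_2$ at $p$, simultaneously over all program points, proving that equivalent subexpressions receive the same value number in the fixpoint partition, so that $e_1$ and $e_2$ land in one class --- or, at a join, in a single class carrying a value $\phi$-function, which the redundancy test likewise reports. Constants and variables are the base cases; for $e_1=u_1\oplus u_2$ and $e_2=w_1\oplus w_2$ with $u_1\equiv w_1$ and $u_2\equiv w_2$, the induction hypothesis yields common value numbers $v_a,v_b$, and along any single path both expressions then share the value expression $v_a\oplus v_b$. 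The genuinely new case is when the equivalence is witnessed by different computations on the two arms of a join block $b$: there it is recorded not by a shared class but by a class annotated with $\phi_b(v_i,v_j)$, and I must show that when the later statement realising this merge is processed, \proc{valuePhiFunc} --- via the distributive rewriting above --- reconstructs exactly that annotation, making the redundancy test fire. To confirm that the fixpoint partition retains \emph{every} valid equivalence and not merely these, I would compare it with Kildall's structured partition, which is known to capture all Herbrand equivalences, and check that the value-$\phi$-function annotations encode precisely the information Kildall keeps by eagerly structuring merged value expressions.

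\emph{Main obstacle.} The crux is the precision direction, and within it the \emph{delayed merge}: \proc{Join} deliberately declines to combine differing value expressions $v_m\oplus v_n$ and $v_p\oplus v_q$, so the key lemma to be established is that nothing is thereby lost --- the postponed merge is always reconstructible by \proc{valuePhiFunc} at the program point where a syntactic expression actually demands it. Proving this completeness of the delayed merge, together with the termination of the recursive \proc{valuePhiFunc} and of the fixpoint iteration (which bound the value $\phi$-functions that can arise and so guarantee the analysis really does ``detect'' a well-defined set of equivalences), is where I expect the real work to be concentrated.
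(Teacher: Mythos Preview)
Your plan is substantially more developed than the paper's own proof, which consists of a single sentence: ``This can be proved by induction on the length of a path in a program.'' The paper offers no split into soundness and precision, no treatment of \textsc{valuePhiFunc}, and no mention of the delayed-merge issue you single out as the crux. So in one sense there is nothing to compare against: you have written the outline the paper omits.

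That said, your organising inductions differ from the one the paper gestures at. The paper suggests a single induction on \emph{path length}, presumably arguing that the partition computed at each point agrees with the set of Herbrand equivalences holding along every path of that length. You instead run two separate inductions --- on the fixpoint iteration count for soundness, and on the structure of a Herbrand-equivalence derivation for precision --- and then appeal to Kildall's structured partitions as a reference semantics. Your decomposition is cleaner for isolating where the value $\phi$-function does work, and it makes explicit the lemma (completeness of the delayed merge via \textsc{valuePhiFunc}) that a bare path-length induction would have to smuggle in somewhere. A path-length induction, on the other hand, handles soundness and precision in one sweep and ties the invariant directly to program executions rather than to the analysis's own iteration; it is closer in spirit to Kildall's original correctness argument. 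Either route can be made to work, but yours is the one that actually names the hard step; the paper's one-liner does not.
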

\begin{proof}
 This can be proved by induction on the length of a path in a program. \qed
\end{proof}

\section{Complexity Analysis}
Let there be $n$ expressions in a program. The two main operations in this iterative algorithm are join and transfer function. By definitions of \proc{Join} and \proc{transferFunction} a partition can have $O(n)$ classes. If there are $j$ join points, the total time taken by all the join operations in an iteration is $O(n.j)$. The transfer function involves constructing and then looking up for value expression or value $\phi$-function in the input partition. The transfer function of a statement takes $O(n.j)$ time. In an iteration total time taken by transfer functions is $O(n^{2}.j)$. Thus the time taken by all the joins and transfer functions in an iteration is $O(n^{2}.j)$. In the worst case the iterative analysis takes $n$ iterations and hence the total time taken by the analysis is $O(n^{3}.j)$.

\section{Conclusion}
We presented GVN algorithm using the novel concept of value $\phi$-function which made the algorithm precise and efficient.

\bibliographystyle{splncs03}
\bibliography{/home/rekha/my_Home/research/reports/my_bib.bib}
\end{document}